\newcounter{sqindex}
\def\arcr{\@arraycr}
\newcommand{\xRightarrow}[2]{\underset{#2}{\overset{#1}{\Rightarrow}}}
\newcommand{\xleftrightarrow}[2][]{\ext@arrow 3359\leftrightarrowfill@{#1}{#2}}
\newcommand{\xdashrightarrow}[2][]{\ext@arrow 0359\rightarrowfill@@{#1}{#2}}
\newcommand{\xdashleftarrow}[2][]{\ext@arrow 3095\leftarrowfill@@{#1}{#2}}
\newcommand{\xdashleftrightarrow}[2][]{\ext@arrow 3359\leftrightarrowfill@@{#1}{#2}}
\def\rightarrowfill@@{\arrowfill@@\relax\relbar\rightarrow}
\def\leftarrowfill@@{\arrowfill@@\leftarrow\relbar\relax}
\def\leftrightarrowfill@@{\arrowfill@@\leftarrow\relbar\rightarrow}
\def\arrowfill@@#1#2#3#4{%
	$\m@th\thickmuskip0mu\medmuskip\thickmuskip\thinmuskip\thickmuskip
	\relax#4#1
	\xleaders\hbox{$#4#2$}\hfill
	#3$%
}
\newcommand\mydef{\mathrel{\overset{\makebox[0pt]{\mbox{\normalfont\tiny\sffamily def}}}{=}}}
\newcommand\ttt{\textbf{tt}}
\newcommand\fff{\textbf{ff}}
\def\plusplus{{\ \hspace{-.05em}\raisebox{.4ex}{\tiny\bf ++}}}
\newcommand{\prop}{{\cal P}\xspace}
\newcommand{\FlagDate}{{Flagging Monitor}\xspace}
\newcommand{\Flagdate}{{Flagging monitor}\xspace}
\newcommand{\flagdate}{{flagging monitor}\xspace}
\newcommand{\Fdate}{{Monitor}\xspace}
\newcommand{\fdate}{{monitor}\xspace}
\newcommand{\Fdates}{{{Monitor}s}\xspace}
\newcommand{\fdates}{{{monitor}s}\xspace}
\renewcommand{\date}{{DATE}\xspace}
\begin{document}
\title{Incorporating Monitors in Reactive Synthesis without Paying the Price\thanks{This research is funded by the ERC consolidator grant D-SynMA under the European Union’s Horizon 2020 research and innovation programme (grant agreement No 772459).}}

\author{Shaun Azzopardi\orcidID{0000-0002-2165-3698} \and
Nir Piterman\orcidID{0000-0002-8242-5357} \and
Gerardo Schneider\orcidID{0000-0003-0629-6853}}
%
\authorrunning{S. Azzopardi et al.}
%
\institute{University of Gothenburg, Gothenburg, Sweden\\
\email{\{name.surname\}@gu.se}
}
\maketitle              
\begin{abstract}
  Temporal synthesis attempts to construct reactive programs that
satisfy a given declarative (LTL) formula. Practitioners have found it challenging to work exclusively with declarative specifications, and have found languages that combine modelling with declarative specifications more useful. Synthesised controllers may also need to work with pre-existing or manually constructed programs.
In this paper we explore an approach that combines synthesis of declarative specifications in the presence of an existing behaviour model as a monitor, with the benefit of not having to reason about the state space of the monitor.
We suggest a formal language with automata monitors as non-repeating and repeating triggers for LTL formulas.
We use symbolic automata with memory as triggers,
resulting in a strictly more expressive and succinct language than existing regular expression triggers.
We give a compositional synthesis procedure for this language, where reasoning about the monitor state space is minimal.
To show the advantages of our approach we apply it to specifications requiring counting and constraints over arbitrarily long sequence of events, where we can also see the power of parametrisation, easily handled in our approach. 
We provide a tool to construct controllers (in the form of symbolic automata) for our language.

  \keywords{synthesis  \and temporal logic \and symbolic automata \and monitoring}
\end{abstract}

\section{Introduction}

Synthesis of programs from declarative specifications is an attractive
prospect. Although thought prohibitive due to the theoretical hardness
of LTL synthesis, recent improvements have made it a more reasonable
endeavour, e.g. the identification of GR(1)
\cite{10.1007/11609773_24}, for which synthesis is easier, and
development of tools such as Strix \cite{strix,strix1} whose
decomposition method allows for practical synthesis of full
LTL.
Limitations remain in the context of LTL, due to the inherent hardness
of the problem.
Beyond LTL there are also directions where the practicality of
synthesis is not clear.

In addition to these algorithmic challenges, there are additional
methodological challenges.
Practitioners have identified that it is sometimes very challenging to
write declarative specifications, and suggested to use additional
modelling \cite{DBLP:journals/corr/FilippidisMH16,maoz2019spectra}.
Furthermore, synthesised parts need to work alongside pre-existing or
manually constructed parts (cf. \cite{DBLP:conf/fossacs/LustigV09}).
This, however, further exacerbates the algorithmic challenge as the
state-space of the additional parts needs to be reasoned about by the
synthesis algorithm.

We argue that modelling could also be a practical way of dealing with
some of the algorithmic challenges and advocate a partial use of synthesis, leaving parts that are
impractical for synthesis to be manually modelled. This leaves the question of how to combine the two parts. 

We suggest to compose automata with synthesised controllers by transfer of control rather than co-operation. 
We define a specification language with repeating and non-repeating \textit{trigger}
properties (cf. \cite{forspec,10.1007/978-3-642-17511-4_18}).
Triggers are defined as   
environment observing automata/monitors, which transfer control to LTL formulas.
Both aspects~--~control transfer and triggers~--~are familiar to practitioners and would be easy to use:
control transfer is natural for software; and 
triggers are heavily used in industrial verification languages (cf. \cite{forspec}). 

We aim at triggers that are rich, succinct and easy to write. 
Thus, we use monitors extracted from symbolic executable automata inspired by DATEs \cite{10.1007/978-3-642-03240-0_13}.
Expressiveness of automata is increased by having variables that are updated by guarded transitions, which means that auomata can be infinite-state (but the benefits remain if they are restricted to finite-state).
This choice of monitors allows to push multiple other interesting
concerns that are difficult for LTL synthesis to the monitor side. 
Experience of using such monitors in the runtime verification community suggests that they are indeed easy to write \cite{DBLP:conf/rv/FalconeKRT18}.

Our contributions are as follows.
We formally define our specification language ``monitor-triggered temporal logic''.
We show that the way we combine monitors with LTL indeed bypasses the need to reason about the state-space of monitors.
Thus, avoiding some of the algorithmic challenges of synthesis.
We briefly present our synthesis tool.
We give examples highlighting the benefits of using monitors, focusing on counting (with appropriate counter variables updated by monitor transitions) and parametrisation (with unspecified parameter variables that can be instantiated to any required value). Full proofs of the propositions and theorems claimed can be found in the appendix.\\

\noindent
\textbf{Related Work}
In the literature we find several approaches that use monitors in the
context of synthesis. Ulus and Belta use monitors with reactive
control for robotic system navigation, with monitors used for
lower-level control (e.g. to identify the next goal locations), and
controllers used for high-level control to avoid conflicts between
different robots \cite{DBLP:conf/rv/UlusB19}.
Wenchao et al.~consider
human-in-the-loop systems, where occasionally the input of a human is
required. The controller monitors the environment for any possible
violations, and invokes the human operator when necessary
\cite{10.1007/978-3-642-54862-8_40}.\footnote{
We can think of our approach as dual, where the monitor invokes the
synthesised controller when necessary.
}
The use of monitors in these approaches is {\em ad hoc}, a more general
approach is that of the \texttt{Spectra} language
\cite{maoz2019spectra}. Essentially, Spectra monitors have an initial
state, and several safety transition rules of the form $p \rightarrow
q$, where $p$ is a proposition on some low-level variables, and $q$
defines the \texttt{next} value of the monitor variable.
This monitor variable can be used in the higher-level controller
specification.
The approach here is more general than ours in a sense, since we limit
ourselves to using monitors as triggers, however our monitors are more
succinct and expressive. 

The notion of triggers in temporal logic is not new, with regular
expressions being used as triggers for LTL formulas in different languages
\cite{forspec,390251,10.5555/2540128.2540252,DBLP:journals/iandc/FaymonvilleZ17}.
Complexity wise, Kupferman et al. show how the synthesis of these trigger properties is 2EXPTIME-complete \cite{10.1007/978-3-642-17511-4_18}.
However, in order to support such logics algorithms would have to
incorporate the entire state-space of the automata induced by the regular
expression triggers. We are not aware of implementations supporting synthesis from such
extensions of LTL.
Using automata directly within the language, as we do, may be more
succinct and convenient.
We also include a \emph{repetition} of trigger formulas in a way that
is different from these extensions.
However, the main difference is in avoiding the need to reason about
the triggering parts. 

Our combination of monitors and LTL formulas can be seen as a control-flow composition \cite{DBLP:conf/fossacs/LustigV09}. 
Lustig and Vardi discuss how to synthesise a control-flow composition that satisfies an LTL formula given an existing library of components. They consider all components to be given and synthesise the composition itself.
Differently, we assume the composition to be given and synthesise a controller for the LTL part. 
Other work given a global specification reduces it according to that of the existing components, resulting in a specification for the required missing component \cite{DBLP:journals/entcs/Raclet08}. This is at a higher level than our work, since we start with specifications for each component.

\nocite{DBLP:journals/fmsd/KupfermanV01}
\nocite{DBLP:reference/mc/Kupferman18}

\section{Preliminaries}


We write $\sigma$ for infinite traces over an event alphabet $\Sigma$. We use the notation $\sigma_{i,j}$,
  where $i, j \in \mathbb{N}$ and $i\leq j$, to refer to the sub-trace of $\sigma$
  starting from position $i$, ending at (including) position $j$. We
  write $\sigma_i$ for $\sigma_{i,i}$, and $\sigma_{i,\infty}$ for the
  suffix of $\sigma$ starting at $i$. \smallskip

\noindent
\textbf{Linear Temporal Logic (LTL)} General LTL ($\phi$) and co-safety LTL ($\varphi$) are defined
  over a set of propositions $\prop$ respectively as follows, where
  $e\in \prop$:
  \begin{align*}
    \phi &\mydef \ttt \mid \fff \mid e \mid \neg e \mid \phi \wedge \phi \mid \phi \vee \phi \mid X\phi \mid \phi U \phi \mid G\phi\\
    \varphi &\mydef \ttt \mid \fff \mid e \mid \neg e \mid \varphi \wedge \varphi \mid \varphi \vee \varphi \mid X\varphi \mid \varphi U \varphi
  \end{align*}

\noindent We also define and use $F \phi \mydef {\ttt}U \phi$ and $\phi W \phi' \mydef (\phi U \phi') \vee G \phi$.
We write 
$\sigma \vdash \phi$ for $\sigma_{0,\infty}\vdash \phi$.
We omit the standard semantics of LTL \cite{DBLP:reference/mc/PitermanP18}.\footnote{This is also available in the Appendix~A.}\smallskip


\noindent
\textbf{Mealy Machines}  A \emph{Mealy machine} is a tuple $C = \langle S, s_0, \Sigma_{in},
  \Sigma_{out}, \rightarrow,F\rangle$, where $S$ is the set of states,
  $s_0$ the initial state, $\Sigma_{in}$ the set of input events,
  $\Sigma_{out}$ the set of output events, $\rightarrow : S \times
  \Sigma_{in} \mapsto \Sigma_{out} \times S$ the complete
  deterministic transition function, and $F\subseteq S$ a set of
  accepting states. For $(s,I,O,s') \in \rightarrow$
  we write $s \xrightarrow{I/O} s'$. 
  

Notice that by definition for every state $s\in S$ and every $I\in
\Sigma_{in}$ there is $O\in \Sigma_{out}$ and $s'$ such that
$s \xrightarrow{I/O} s'$. 
A \emph{run} of the Mealy machine $C$ is $r=s_0,s_1,\ldots$ such that
for every $i\geq 0$ we have $s_i \xrightarrow{I_i/O_i} s_{i+1}$ for some
$I_i$ and $O_i$. A run $r$ \emph{produces} the word
$w=\sigma_0,\sigma_1,\ldots$, where $\sigma_i = I_i\cup O_i$.
We say that $C$ produces the word $w$ if there exists a run $r$
producing $w$.
We say that $C$ \emph{accepts} a prefix $u$ of $w$ if $s_{|u|}\in F$. \smallskip

\noindent
\textbf{Realisability}  An LTL formula $\varphi$ over set of events $\prop = \prop_{in}
  \cup \prop_{out}$ is \emph{realisable} if there exists a
  Mealy machine $C$ over input events $2^{\prop_{in}}$ and output events $2^{\prop_{out}}$
  such that for all words $w$ produced by $C$ we
  have $w\vdash \varphi$. We say $C$ realises $\varphi$. 

\begin{theorem}[{\rm \cite{DBLP:conf/popl/PnueliR89}}]
  Given an LTL formula $\varphi$ it is decidable in 2EXPTIME whether
  $\varphi$ is realisable.
  If $\varphi$ is realisable the same algorithm can be used to
  construct a Mealy machine $C_\varphi$ realising $\varphi$.
\end{theorem}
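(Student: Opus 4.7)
The plan is to follow the classical Pnueli--Rosner route, reducing realisability to solving an $\omega$-regular two-player game of perfect information. First, I would translate the LTL formula $\varphi$ over $\prop = \prop_{in}\cup\prop_{out}$ into a nondeterministic B\"uchi automaton $\mathcal{A}_\varphi$ over alphabet $2^{\prop}$ accepting exactly the models of $\varphi$. This step is standard (via the tableau/Vardi--Wolper construction) and costs a single exponential: $|\mathcal{A}_\varphi| = 2^{O(|\varphi|)}$.

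Next I would determinise $\mathcal{A}_\varphi$ using Safra's construction (or a more modern variant targeting parity automata) to obtain a deterministic parity automaton $\mathcal{D}_\varphi$ with $2^{2^{O(|\varphi|)}}$ states and $2^{O(|\varphi|)}$ priorities. I would then set up the realisability game $G$ on the state space of $\mathcal{D}_\varphi$: in each round the environment picks $I \in 2^{\prop_{in}}$ and then the system responds with $O \in 2^{\prop_{out}}$; the composite letter $I\cup O$ drives a transition of $\mathcal{D}_\varphi$. The system wins the infinite play iff the resulting run of $\mathcal{D}_\varphi$ satisfies the parity condition, equivalently iff the produced word models $\varphi$.

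Realisability then coincides with the system having a winning strategy in $G$. Parity games are decidable, and on a game of size $N$ with $k$ priorities solving takes time polynomial in $N$ and exponential in $k$; plugging in $N = 2^{2^{O(|\varphi|)}}$ and $k = 2^{O(|\varphi|)}$ still yields a doubly exponential bound overall, matching the claimed 2EXPTIME. If the system wins, one extracts a finite-memory, positional-on-the-game winning strategy; the Mealy machine $C_\varphi$ is obtained by taking states to be the reachable positions of $\mathcal{D}_\varphi$ under the strategy, with the transition $s \xrightarrow{I/O} s'$ defined by the strategy's choice of $O$ on input $I$ and the determinstic successor in $\mathcal{D}_\varphi$. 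By construction every word produced by $C_\varphi$ satisfies $\varphi$, so $C_\varphi$ realises $\varphi$.

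The main technical obstacle is the determinisation step: Safra's construction is intricate and responsible for the second exponential blowup and for the need to work with parity (rather than B\"uchi) acceptance, which in turn is what forces the use of parity games instead of simpler B\"uchi games. The other steps---LTL-to-NBA, game construction, and strategy extraction---are comparatively routine once determinisation is in hand. Since this statement is quoted from Pnueli--Rosner, I would cite that work for the fine-grained complexity bound rather than reprove the lower bound matching this upper bound.
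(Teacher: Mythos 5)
This theorem is stated in the paper as a cited background result from Pnueli--Rosner, with no proof given, so there is nothing in the paper to compare against line by line. Your sketch is the standard and correct modern argument: LTL to nondeterministic B\"uchi automaton (one exponential), Safra-style determinisation to a deterministic parity (or Rabin) automaton (a second exponential), reduction of realisability to solving the induced two-player $\omega$-regular game with the input/output round structure, and extraction of a finite-memory winning strategy as the Mealy machine. The only remark worth making is historical rather than mathematical: Pnueli and Rosner's original proof phrases the last step via nonemptiness of a Rabin tree automaton running over all strategy trees rather than via parity games, but the two formulations are interchangeable and yield the same doubly exponential bound, so your route is a faithful (and arguably cleaner) reconstruction of the cited result.
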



\subsection{{\FlagDate}s}

We introduce our own simplified version of {\date}s
\cite{10.1007/978-3-642-03240-0_13,Shaun_Azzopardi48593129}, \textit{\flagdate}s, as a formalism for defining runtime monitors.
{\Flagdate}s (\fdates, for short) are different from {\date}s in that they work
in discrete time, and events are in the form of sets.
\Fdates are designed such that once they \emph{flag} (accept) they
never flag again.
This is modeled by having
\emph{flagging} states, which are used to signal that monitoring has
ended successfully. We also use \emph{sink}
states, from which it is assured the \fdate cannot flag in the
future.
We ensure that the \fdate flags only upon determining a matching sub-trace, and thus a \fdate upon reaching a flagging
state can never flag again. \smallskip

\noindent
\textbf{Monitor}
  A \emph{\fdate} is a
  tuple $D = \langle \Sigma, \mathbb{V}, \Theta, Q,
  \theta_0, q_0, F, \perp, \rightarrow\rangle$, where $\Sigma$ is the event alphabet, $\mathbb{V}$ is a set of typed variables, $\Theta$ is the set of possible valuations of $\mathbb{V}$, $Q$ is a finite set of states, $\theta_0 \in \Theta$ is the initial variable valuation, $q_0 \in Q$ is the initial state, $F \subseteq (Q \setminus \{q_0\})$ is the set of \emph{flagging states} (we often use $q_F \in F$), $\perp \in Q$ is a \emph{sink} state, and $\rightarrow \in Q_\top \times (\Sigma \times \Theta
    \mapsto \{\textit{true}, \textit{false}\}) \times
    (\Sigma \times \Theta \mapsto \Theta) \mapsto Q$
    is the \emph{deterministic transition function}, from $Q_\top \mydef Q \setminus \{\perp\}$, activated if a
    guard holds on the input event and the current variable
    valuation, while it may perform some action to transform the
    valuation.

  For $(q,g,a,q') \in \rightarrow$ we
  write $q \xrightarrow{g \mapsto a} q'$, and we will be using $E$ as the input event parameter for both $g$ and $a$. We omit $g$
  when it is the \textit{true} guard,
  and $a$ when it is the \textit{null} action.
  We use $D_{\langle * \rangle}$ for the \fdate that accepts
  on every event, i.e. $\langle \Sigma, V, \Theta, \{q_0, q_F,\perp\},
  \theta_0, q_0, \{q_F\}, \perp, \{q_0
  \xrightarrow{ true \mapsto null} q_F\} \rangle$.

For example, the \fdate in Figure~\ref{fig:countingdate} keeps a counter that counts the number of \textit{knock} events, and flags when the number of knocks is exactly $n$.


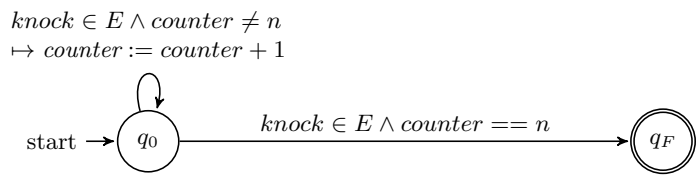
\begin{figure}[tbp!]\centering
	\begin{tikzpicture}[->,>=stealth',shorten >=1pt,auto,node distance=2.8cm,semithick,cross/.style={path picture={\draw[black](path picture bounding box.south east) -- (path picture bounding box.north west) (path picture bounding box.south west) -- (path picture bounding box.north east);}}]
		\tikzstyle{every state}=[text=black]
		
		\node[initial,state] (A)                    {$q_0$};
		\node[state, accepting]         (acc) [right = 6cm of A] {$q_F$};
		
		\path (A) edge [loop above]              node {$\begin{array}{l}
				knock \in E \wedge counter \neq n\\ \mapsto \textit{counter} := counter + 1
			\end{array}$} (A)
		edge []             node {$knock \in E \wedge counter == n$} (acc);
	\end{tikzpicture}
	\caption{\Fdate that counts the number of knocks, and flags after $n$ knocks.}
	\label{fig:countingdate}
\end{figure}

We give an operational semantics to \textit{\fdate}s, with
configurations as pairs of states and valuations, with transitions
between configurations tagged by events. \smallskip

\noindent
\textbf{Monitor Semantics}
\label{def:fdatesemantics}
The semantics of {\flagdate}s \cite{Shaun_Azzopardi48593129}
is given over configurations of type $Q \times \Theta$, with
transitions labeled by $\Sigma$, and the
transition $\rightarrow$ defined by the following rules:
\begin{inparaenum}[(1)]
  \item
    A transition from a non-flagging and non-sink configuration is
    taken when the guard holds on the event and valuation, and then the
    latter is updated according to the transition's action;
  \item
    If there is no available transition whose guard holds true on the 
    current valuation then transition to the same configuration
    (stutter);
  \item
    A sink configuration cannot be left; and
  \item
    A flag configuration always transitions to the sink configuration.
\end{inparaenum}
We use $\Rightarrow$ for the transitive closure of $\rightarrow$.\footnote{See Definition~\ref{def:fullmonsemantics} in Appendix~A for full formal semantics.}\smallskip 

\begin{toappendix}
	\section{Full Definitions and Proofs for Section 2 (Background)}

	\begin{definition}[General LTL Satisfaction]
		\[
		\begin{array}{lll}
			\sigma_{i,j} \vdash \ttt &\mydef& \textit{true}\\
			\sigma_{i,j} \vdash \fff &\mydef& \textit{false}\\
			\sigma_{i,j} \vdash e &\mydef& e \in \sigma_{i,i}\\
			\sigma_{i,j} \vdash \neg\phi &\mydef& \neg (\sigma_{i,j} \vdash \phi)\\
			\sigma_{i,j} \vdash \phi \wedge \phi' &\mydef& (\sigma_{i,j} \vdash \phi) \wedge (\sigma_{i,j} \vdash \phi')\\
			\sigma_{i,j} \vdash \phi \vee \phi' &\mydef& (\sigma_{i,j} \vdash \phi) \vee (\sigma_{i,j} \vdash \phi')\\
			\sigma_{i,j} \vdash X\phi &\mydef& j > i \wedge \sigma_{i+1, j} \vdash \phi\\
			\sigma_{i,j} \vdash \phi U \phi' &\mydef& \exists l \cdot l \leq j \wedge \sigma_{l,\infty} \vdash \phi' \wedge \forall k \cdot i \leq k < l \wedge \sigma_{k,j} \vdash \phi\\ 
			\sigma_{i,j} \vdash G\phi &\mydef& j = \infty \wedge \forall k \cdot k \geq i \implies \sigma_{k,j} \vdash \phi
		\end{array}
		\]
	\end{definition}

\begin{definition}\label{def:fullmonsemantics}
		The semantics of {\flagdate}s \cite{Shaun_Azzopardi48593129}
	is given over configurations of type $Q \times \Theta$, with
	transitions labeled by $\Sigma$, and the
	transition $\rightarrow$ defined by the following rules:

		\begin{enumerate}
		\item A transition from a non-flagging and non-sink configuration is
		taken when the guard holds on the event and valuation, and the
		latter updated according to the transition's action:
		\begin{mathpar}
			\inferrule[]
			{q_1 \xrightarrow{ g \mapsto a} q_2 \qquad g(E, \theta) \qquad q_1 \not\in F \cup \{\perp\}}
			{(q_1,\theta) \xrightarrow{E} (q_2, a(E, \theta))}
		\end{mathpar}
	
		\item If there is no available transition whose guard holds true on the 
		current valuation then transition to the same configuration
		(stutter):
		\begin{mathpar}
			\inferrule[]
			{\nexists q_1 \xrightarrow{ g \mapsto a} q_2 \cdot g(E, \theta)}
			{(q_1,\theta) \xrightarrow{E} (q_1,\theta)}
		\end{mathpar}
	
		\item A sink configuration cannot be left:
		\begin{mathpar}
			\inferrule[]
			{\quad}
			{(\perp,\theta) \xrightarrow{E} (\perp, \theta)}
		\end{mathpar}
	
		\item A flag configuration always transitions to a sink configuration.
		\begin{mathpar}
			\inferrule[]
			{q_F \in F}
			{(q_F,\theta) \xrightarrow{E} (\perp, \theta)}
		\end{mathpar}
	\end{enumerate}
\end{definition}

\end{toappendix}


\noindent
\textbf{Flagging Trace} A finite trace is said to be flagging if it reaches a flagging state.
	$
		\sigma_{i,j} \Vdash D \mydef \exists q_F, \theta' \cdot (q_0, \theta_0) \xRightarrow{\sigma_{i,j}}{} (q_F,\theta').
	$

The semantics ensures that every extension of a flagging trace is non-flagging.

\begin{propositionrep}
	\label{prop:fdatenoextensions}
  $\forall \sigma\in\Sigma^\omega \cdot \forall n \in \mathbb{N} \cdot \sigma_{i,j} \Vdash D \wedge n > 0 \implies \sigma_{i,j + n} \not\Vdash D$.
\end{propositionrep}
\begin{appendixproof} 
	If $\sigma_{i,j} \Vdash D$, then $(q_0, \theta_0) \xRightarrow{\sigma_{i,j}}{} (q_F,\theta')$. By the last rule of the semantics Def.~\ref{def:fdatesemantics} then $(q_0, \theta_0) \xRightarrow{\sigma_{i,j + n}}{} (\perp,\theta')$, and by the third rule of the semantics $(\perp,\theta')$ is a sink state, and thus $q_F$ cannot be reached. Thus no extension can satisfy $D$, since satisfaction requires exactly reaching $q_F$.
\end{appendixproof}

We can also easily show that $D_{\langle * \rangle}$ accepts all traces of length one.

\begin{propositionrep}
\label{prop:emptydateacceptssingleevent}
$\forall \sigma\in\Sigma^\omega$ and $\forall i \in \mathbb{N} \cdot \sigma_{i,i} \Vdash D_{\langle * \rangle}$.
\end{propositionrep}
\begin{appendixproof} 
	Given a general $i$ then $\sigma_{i}$ is of the form $\langle E \rangle$, and by definition of $D_{\langle * \rangle}$ and by the first rule of Defn.~\ref{def:fdatesemantics} then $(q_0, \theta_0) \xRightarrow{E}{} (q_F, \theta_0)$. By the definition of flagging traces then $\sigma_{i} \Vdash D_{\langle * \rangle}$ follows immediately.
\end{appendixproof}


\section{Monitors as Triggers for LTL Formulas}


We suggest a simple kind of interaction between \fdates and LTL, where
\fdates are used as \emph{triggers} for LTL. Previous work has
considered the use of a trigger operator that activates the
checking of an LTL expression when a certain regular expression matches \cite{forspec}. Our approach here is similar, except that we maintain a stricter separation between the monitored and temporal logic parts. 

Our language combining \fdates with LTL has three operators: (i) \fdates
as a trigger for an LTL formula; (ii) repetition of the
trigger formula (when the LTL formula is co-safety); and (iii) assumptions in
the form of LTL formulas.

\begin{definition}[Monitor-Triggered Temporal Logic]
  Monitor-triggered temporal logic extends LTL with three operators: 
  \[
  \begin{array}{lll}
    \pi' &=& D\text{:}\phi \mid (D;\varphi)^*\\
    \pi &=& \phi \rightarrow \pi'
  \end{array}
  \]

  \noindent
  Formula $D\text{:}\phi$ denotes the triggering of an LTL formula
  $\phi$ by a \fdate $D$. We call formulas of this form \emph{simple-trigger LTL}.
  Formula $(D;\varphi)^*$ repeats infinitely the triggering of a
  co-safety LTL formula $\varphi$ by a \fdate $D$.
  We call formulas of this form \emph{repeating-trigger LTL}.
  Finally, $\phi \rightarrow \pi'$ models a specification with an LTL
  assumption $\phi$. LTL formulas are defined over a set of propositions $\prop = \prop_{in} \cup \prop_{out}$ and {\fdate}s over the alphabet $\Sigma = 2^{\prop_{in}}$.

  %

\end{definition}

In formulas of the form $D{:}\phi$ if $D$ flags then the suffix must satisfy $\phi$.
In formulas of the form $(D;\varphi)^*$, the monitor restarts after
satisfaction of the co-safety formula $\varphi$.
For example, if $D$ is the \fdate in Figure~\ref{fig:countingdate} and
$\varphi = (\textit{open} \wedge X (\textit{greet} \wedge X
\textit{close})))$, then $D{:}\varphi$ would accept every trace that waits  
for knocks, and at the $n$th knock opens the door, then 
greets, and then closes the door. On the other hand, $(D;\varphi)^*$
requires the trace to arbitrarily repeat this behaviour. 

To support repeating triggers, we define the notion of tight satisfaction.

\begin{definition}[Tight Co-Safety LTL Satisfaction]
	\label{def:tightsat}
  A finite trace is said to \emph{tightly} satisfy a co-safety LTL
  formula if it satisfies the formula and no strict prefix satisfies
  the formula: 
  $\sigma_{i,j} \Vdash \varphi \mydef \sigma_{i,j} \vdash \varphi
  \wedge (\forall k \cdot i \leq k < j \implies \sigma_{i,k}
  \not\vdash \varphi)$. We also call such a trace a tight witness for the LTL formula.
\end{definition}

Note that here a tight witness is not necessarily a minimal witness
(in the sense that all of its extensions satisfy the LTL formula). For
example, for every set of propositions $P$, a trace $\langle P \rangle$ is a minimal witness for $X \ttt$
\cite{10.1007/978-3-540-77395-5_11}. However it is not a tight witness
in our sense, since $\langle P \rangle \not \vdash X \ttt$. On the
other hand $\langle P, P \rangle$ is a tight witness since $\langle P,
P \rangle \vdash X \ttt$ and every prefix of it does not satisfy $X \ttt$.

Notice that it would not be simple to just use finite trace semantics for
full LTL
\cite{fisman1,fisman,10.5555/2540128.2540252,countingsemantics,10.1007/978-3-540-77395-5_11}.
Consider for example, the trace $\langle \{a\}\rangle$, which
satisfies $G a$.
It is not clear how to define tight satisfaction in order to start the
monitor again.
For example, $\langle \{a\} \rangle$ can be extended to $\langle
\{a\}, \{a\}\rangle$ and still satisfy $G a$. 
Hence formulas of the form $(D;\varphi)^*$ are restricted to
co-safety LTL, where satisfaction over finite traces is
well-defined and accepted.


We now define the trace semantics of the trigger and repetition
operators.


\begin{definition}[Monitor-Trigger Temporal Logic Semantics]
  \label{def:mttlsem}
  \begin{enumerate}
  \item An infinite trace satisfies a simple-trigger LTL formula if when a prefix of it causes the \textit{\fdate} to flag then the corresponding suffix (including the last element of the prefix) satisfies the LTL formula:
    %
    \begin{equation*}
      \sigma_{i,\infty} \vdash D\text{:}\phi \mydef \exists j \cdot i \leq j \wedge (\sigma_{i,j} \Vdash D \implies \sigma_{j, \infty} \vdash \phi)  \text{\qquad where $i \in \mathbb{N}$}.
    \end{equation*}

  \item A finite trace satisfies one step of a repeating-trigger LTL formula if a prefix of it causes the \textit{\fdate} to flag and the corresponding suffix (including the last element of the prefix) tightly satisfies the co-safety LTL formula:
    %
    \begin{equation*}
      \sigma_{i,k} \vdash D;\varphi \mydef \exists j \cdot i \leq j \leq k \wedge (\sigma_{i,j} \Vdash D \wedge \sigma_{j, k} \Vdash \varphi) \text{\qquad where $i,k \in \mathbb{N}$}.
    \end{equation*}
    %

  \item An infinite trace satisfies a repeating-trigger LTL formula if when a prefix of it matches the
    \textit{\fdate} then the corresponding infinite suffix matches the
    LTL formula:	 
    \begin{equation*}
      \sigma \vdash (D;\varphi)^* \mydef \forall i \cdot \sigma_{0,i} \Vdash D \implies \exists j \cdot j \geq i\ \wedge\ \sigma_{0,j} \vdash D;\varphi\ \wedge\ \sigma_{j + 1, \infty} \vdash (D;\varphi)^*.
    \end{equation*}
    
  \item An infinite trace satisfies a specification $\pi'$ with an assumption $\phi$ when if it satisfies $\phi$ it also satisfies $\pi'$:
    
    \begin{equation*}
      \sigma \vdash \phi \rightarrow \pi' \mydef \sigma \vdash \phi \implies \sigma \vdash \pi'.
    \end{equation*}

  \end{enumerate}
\end{definition}

\noindent



\noindent
An interesting aspect of this semantics is that in a formula $D;\varphi$,
$D$ and $\varphi$ share an event, and the same for $D{:}\phi$. This is a choice we make to allow for
message-passing between the two later on. Here it does not limit us,
since not sharing a time step can be simulated by
adding a further transition with a \textit{true} guard before
flagging, or by simply transforming $\phi$ into $X \phi$.

%
%
%
%

This semantics ensures that given an infinite trace, when a finite
sub-trace satisfies $D;\varphi$, extensions of the sub-trace do not also satisfy it.

\begin{propositionrep}
	$\sigma_{i,j} \vdash D;\varphi \implies \forall k > j \cdot \sigma_{i,k} \not\vdash D;\varphi.$\footnote{Proof can be found in Appendix~B.}
\end{propositionrep}
\begin{appendixproof}
	($\implies$) If $\sigma_{i,j} \vdash D;\varphi$ then $\sigma_{i,j}$ can be divided into $\sigma_{i,j'}$ and $\sigma_{j',j}$ such that $\sigma_{i,j'} \Vdash D$ and $\sigma_{j',j} \Vdash \varphi$ is also true.
	
	An easy corollary of Proposition~\ref{prop:fdatenoextensions} is that there if a trace flags then any strict prefix of it does not flag. Then $j'$ is unique here. Furthermore, Definition~\ref{def:tightsat} ensures that $\sigma_{j',j}$ does not a strict prefix that also satisfies $\varphi$. 
	
	Consider for contradiction that $\exists k > j \cdot \sigma_{i,k} \vdash D;\varphi$. Then we know that $\sigma_{i,j'} \Vdash D$ and $\sigma_{j',k} \Vdash \varphi$ (as determined before the choice of $j'$ here is the only choice). Definition~\ref{def:tightsat} here causes a contradiction, since $\sigma_{j',j} \Vdash \varphi$ implies that $\sigma_{j',j} \vdash \varphi$, but $\sigma_{j',k} \Vdash \varphi$ implies that $\sigma_{j',j}
	\not\vdash \varphi$. \qed
\end{appendixproof}

We can prove that a trace $\sigma$ satisfies an LTL formula $\phi$ iff it also satisfies the formula where $\phi$ is triggered by the empty \textit{\fdate}.

\begin{propositionrep}\label{p:removeemptydate}
	$\sigma \vdash \phi \iff \sigma \vdash D_{\langle * \rangle}{:}\phi.$\footnote{Proof can be found in Appendix~B.}
\end{propositionrep}
\begin{appendixproof}
	($\implies$) By definition of $\sigma_{o,\infty} \vdash D_{\langle * \rangle}{:}\phi$, Definition~\ref{def:mttlsem}, then $\exists j \cdot 0 \leq j \wedge (\sigma_{0,j} \Vdash D_{\langle * \rangle} \implies \sigma_{j, \infty} \vdash \phi)$. Since $D_{\langle * \rangle}$ flags upon only one event, by Proposition~\ref{prop:emptydateacceptssingleevent}, then $\sigma_{0,j} \Vdash D_{\langle * \rangle}$ is true for $j = 0$. But $\sigma_{0, \infty} \vdash \phi)$ is equivalent to $\sigma \vdash \phi$, which we assumed. \qed
	
	($\impliedby$) Suppose $\sigma \vdash D_{\langle * \rangle}{:}\phi$. Then $\sigma_{0} \Vdash D_{\langle * \rangle}$, from Proposition~\ref{prop:emptydateacceptssingleevent}. Then by the first rule of Definition~\ref{def:mttlsem} $\sigma_{0,\infty} \vdash \phi$, which is equivalent to the left-hand side. \qed
\end{appendixproof}


%
%

Moreover, we can show that adding these \fdates as triggers for LTL formulas results in a language that is more powerful than LTL. 

\begin{theorem}
  Our language is strictly more expressive than LTL.
  \label{more expressive}
\end{theorem}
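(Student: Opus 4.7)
The plan has two parts. For the containment $\text{LTL} \subseteq \text{MTTL}$, I will invoke Proposition~\ref{p:removeemptydate}: for any LTL formula $\phi$, the MTTL formula $D_{\langle * \rangle}{:}\phi$ defines the same set of infinite traces, so every LTL-definable property is MTTL-definable.

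For strict inclusion, since every LTL-definable language is $\omega$-regular, I need only exhibit an MTTL formula whose model set is not $\omega$-regular; the idea is to exploit the unbounded integer variables available to fdates. Over propositions $\{a,b,c\}$, I will construct an fdate $D_{bal}$ with a single integer counter $n$ initialised to $0$ that increments $n$ on events where $a$ holds but $b$ does not, decrements $n$ on events where $b$ holds but $a$ does not, and transitions to its flagging state exactly when $n$ returns to $0$ after at least one such update. The candidate separating formula is $\psi = D_{bal}{:}c$, asserting that $c$ holds at the first $a$/$b$-balance point.

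To show that $\psi$ is not $\omega$-regular I will restrict to the sub-alphabet $\Sigma_{ab} = \{\{a\},\{b\}\}$, on which $c$ is never true. Over $\Sigma_{ab}^\omega$ the formula $\psi$ holds of exactly those traces on which $D_{bal}$ never flags, i.e., of the language $L = \{\sigma \in \Sigma_{ab}^\omega : \text{every non-empty prefix of } \sigma \text{ has } \#_a \neq \#_b\}$. A Myhill--Nerode style argument then distinguishes $a^n$ from $a^m$ for any $n < m$: the continuation $b^n a^\omega$ yields $a^n b^n a^\omega \notin L$ (balance is reached at position $2n$) whereas $a^m b^n a^\omega \in L$ (the $a$-count stays strictly above $n$ forever). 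Thus $L$ has infinitely many right-congruence classes and is not $\omega$-regular; since $\Sigma_{ab}^\omega$ is $\omega$-regular, neither is $\psi$, and so $\psi$ is not LTL-definable.

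The main obstacle will be the precise encoding of $D_{bal}$ within the fdate formalism — in particular, arranging the guard on the flag transition from the tracking state to fire exactly when the post-update counter equals $0$ and at least one step has been taken, while keeping the transition function deterministic and total on $Q_\top$. This is routine but requires care to split the guards disjointly between the self-loops and the flag edge.
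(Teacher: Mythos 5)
Your proof is correct, but the strictness half takes a genuinely different route from the paper. The containment direction is identical: both you and the paper invoke Proposition~\ref{p:removeemptydate} to embed LTL via the trivial trigger $D_{\langle * \rangle}$. For strictness, the paper stays inside the $\omega$-regular world: it uses the \emph{repetition} operator with the variable-free monitor $D_{\langle * \rangle}$ to express Wolper's classical ``$p$ holds at every even position'' property, which is $\omega$-regular but not LTL-definable. You instead use a \emph{simple} trigger whose monitor carries an unbounded integer counter, producing a language that is not even $\omega$-regular, and your right-congruence argument for non-regularity is sound (a deterministic $\omega$-automaton induces finitely many classes of the relation $u \sim v \iff \forall w.\, uw \in L \Leftrightarrow vw \in L$, and you exhibit infinitely many). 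Each approach buys something the other does not: yours establishes a strictly stronger separation (beyond $\omega$-regularity), whereas the paper's argument is what justifies its explicit follow-up remark that Theorem~\ref{more expressive} ``holds even if we consider only monitors whose variables have finite domains, or even monitors without variables'' --- a claim your proof cannot support, since it hinges essentially on the infinite-domain counter, and it also exercises neither the repetition operator nor the tight-satisfaction machinery that the paper's example showcases. The encoding caveat you flag at the end (splitting guards disjointly so the flag edge fires exactly when the post-update counter is zero, testable on the pre-update valuation as $n = \pm 1$) is indeed routine and does not threaten the argument.
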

\begin{proof}
  Proposition~\ref{p:removeemptydate} shows that every LTL
  formula $\phi$ can be written in our language as
  $D_{\langle*\rangle};\phi$. LTL cannot express the property that each even time step must have $p$ be true \cite{DBLP:journals/iandc/Wolper83} (regardless of what is true at odd steps). In our language $(D_{\langle * \rangle};p \wedge X \ttt)^*$ specifies that $p$ is true in every even time step, and $(D_{\langle *\rangle};X p)^*$ specifies that $p$ is true in every odd
  time step. \qed 
\end{proof}

\begin{figure}[tbp!]\centering
	\begin{tikzpicture}[->,>=stealth',shorten >=1pt,auto,node distance=2.8cm,semithick,cross/.style={path picture={\draw[black](path picture bounding box.south east) -- (path picture bounding box.north west) (path picture bounding box.south west) -- (path picture bounding box.north east);}}]
		\tikzstyle{every state}=[text=black]
		
		\node[initial,state] (A)                    {$q_0$};
		\node[state, accepting]         (acc) [right = 6cm of A] {$q_F$};
		
		\path (A) edge [loop above, above right]              node {
			$\begin{array}{l}
				e \in E \wedge \textit{eventsNo} \div \textit{stepsNo} > n\\ \mapsto \textit{eventsNo} := eventsNo + 1; \textit{stepsNo} := stepsNo + 1
			\end{array}$} (A);
		\path (A) edge [loop below, below right]              node {
			$\begin{array}{l}
				e \not\in E \wedge \textit{eventsNo} \div \textit{stepsNo} > n\\ \mapsto \textit{stepsNo} := stepsNo + 1
			\end{array}$} (A);
		\path (A) edge []             node {$\textit{eventsNo} \div \textit{stepsNo} \leq n$} (acc);
	\end{tikzpicture}
	\caption{\Fdate that keeps track of the number of time steps, and the number of occurrences of $e$, while flagging is the average occurrence of $e$ goes below $n$.}
	\label{fig:averagingdate}
\end{figure}
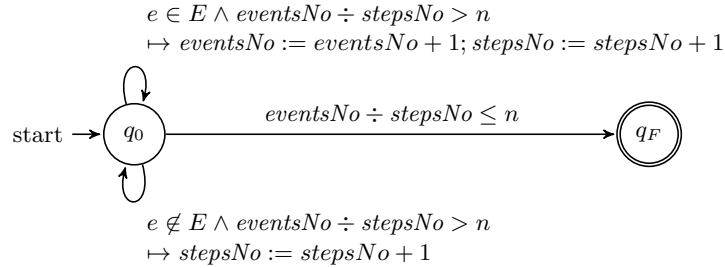

Our logic is even more expressive, for example Figure~\ref{fig:averagingdate} shows a \fdate that flags upon the average occurrence of an event falling below a certain level. We note that, in general, we have not restricted the types of
variables of a monitor to range over finite domains.
Thus, a monitor could also identify context-free or context-sensitive
languages or, indeed, be Turing powerful.
However, Theorem~\ref{more expressive} holds even if we consider only monitors whose variables have finite domains, or even monitors without variables.

\section{Synthesising Monitor-Triggered Controllers}

We have so far discussed our language from a satisfaction 
viewpoint. However we are interested in synthesising systems that
enforce the specifications in our language. In this section we present
our synthesis approach, which relies on the synthesis of controllers
for LTL formulas. 

Consider a specification $\pi = \gamma \rightarrow \pi'$, where $\pi'$ is
either of the form $D{:}\phi$ or $(D;\varphi)^*$.
We focus on specifications where the assumption $\gamma$
is restricted to conjunctions of simple invariants, transition
invariants, and recurrence properties. Formally, we have the
following:
\begin{align*}
  \alpha & \mydef \ttt \mid \fff \mid a \mid \neg \alpha \mid \alpha
  \wedge \alpha \mid \alpha \vee \alpha\\
  \beta & \mydef \alpha \mid X\alpha \mid \beta \wedge \beta \mid
  \beta \vee \beta \\
  \gamma & \mydef G\beta \mid GF \alpha \mid \gamma \wedge \gamma
\end{align*}
That is, $\alpha$ are Boolean combinations of propositions, $\beta$
allows next operators without nesting them, and $\gamma$ is a
conjunction of invariants of Boolean formulas, Boolean formulas that
include next, or recurrence of Boolean formulas.
We discuss below the case of general assumptions.

Let $\pi=\gamma \rightarrow (D{:}\phi)$. Then 
$t(\pi)$ is the formula $\gamma \rightarrow \phi$.
Let $\pi=\gamma \rightarrow (D;\varphi)^*$.
Then $t(\pi)$ is the formula $\gamma \rightarrow \varphi$.
That is, $t(\pi)$ is the specification obtained by considering the
implication of the assumption $\gamma$ 
and the LTL formula.

\subsection{Tight Synthesis for co-Safety Implication Formulas}
Let $\pi$ contain a repeating trigger and let
$t(\pi)=\gamma\rightarrow\varphi$, where $\varphi$ is a co-safety
formula.
Suppose that $t(\pi)$ is realisable and let $C_{t(\pi)}$ be a 
Mealy machine realising $t(\pi)$.

\begin{definition}
A Mealy machine $C$ \emph{tightly realises} a formula of the form
$\gamma\rightarrow \varphi$, where $\varphi$ is a co-safety formula,
if it realises $\gamma\rightarrow \varphi$ and in addition for every word
$w$ produced by $C$ such that $w\vdash \gamma$ there 
exists a prefix $u$ of $w$ such that $C$ accepts $u$,
$u_{0,|u|}\Vdash \varphi$, and for every $u'<u$ we have $C$ does not
accept $u$. 
\end{definition}

That is, when the antecedent $\gamma$ holds, the Mealy machine accepts
the \emph{tight} witness for satisfaction of $\varphi$. 

\begin{theoremrep}\label{thm:tightreal}
  The formula $t(\pi)=\gamma\rightarrow \varphi$ is tightly
  realisable iff it is realisable.
  A Mealy machine tightly realising $t(\pi)$ can be constructed from
  $C_{t(\pi)}$ with the same complexity.
\end{theoremrep}
\begin{proofsketch}
	We construct a deterministic finite automaton that is at most doubly exponential in $\phi$, that accepts all finite prefixes that satisfy $\phi$. Its product with $C_{t(\pi)}$ results in a Mealy machine that accepts all prefixes that satisfy $t(\pi)$, in particular the shortest prefix, as required for realisability.\footnote{Full proof can be found in Appendix~C.}
\end{proofsketch}

\begin{appendixproof}
  Clearly, if $t(\pi)$ is not realisable then it cannot be tightly
  realisable.

  Suppose that $t(\pi)$ is realisable and let $C_{t(\pi)}$ be the
  Mealy machine realising it.
  We show how to augment $C_{t(\pi)}$ with accepting states.

  First, given a co-safety formula $\varphi$, we can construct a
  deterministic finite automaton accepting all \emph{finite} prefixes $u$ such
  that $u_{0,|u|}\vdash \varphi$.
  Construct an alternating weak automaton (AWW) $A_\varphi$ for
  $\varphi$ using standard techniques \cite{DBLP:reference/mc/Kupferman18}.
  The only states of this automaton that have self loops are states of
  the form $\psi_1 U \psi_2$.
  It follows that the only option for this automaton to accept is by a
  transition to $\ttt$.
  We construct a nondeterministic finite automaton (NFW) $N_\varphi$ from
  $A_\varphi$ by using a version of the subset construction and having
  the empty set as the only accepting state of the NFW.
  We then construct a deterministic finite automaton $D_\varphi$ by
  applying the classical subset construction to $N_\varphi$. 

  The AWW $A_\varphi$ is linear in $\varphi$, the NFW $N_\varphi$ is
  at most exponential in $\varphi$, and the DFW $D_\varphi$ is at most
  doubly exponential in $\varphi$. 

  We then take the product of $D_\varphi$ with the 
  Mealy machine $C_{t(\pi)}$.
  
  As $C_{t(\pi)}$ realises $t(\pi)$, whenever a word $w$ satisfies
  $\gamma$ it must satisfy $\varphi$ as well.
  Then, as $D_\varphi$ accepts all prefixes that
  satisfy (tightly) $\varphi$, the first prefix of $w$ that satisfies
  $\varphi$ is accepted by the Mealy machine.%
  \footnote{
  We conjecture that realisability procedures relying on determinisation
  and the Mealy machines constructed from them could be further
  analysed to produce the required tight Mealy machine without
  requiring the additional automaton $D_\varphi$.
  }

  Note that the construction identifying violating prefixes for a
  safety language \cite{DBLP:journals/fmsd/KupfermanV01} would not
  produce the required result:
the automaton constructed using that technique would
  identify the empty trace as satisfying $X \ttt$. \qed
\end{appendixproof}

Note that in the case of tight realisability we can give a controller
with a set of accepting states that enable us to accept upon observing
tight witnesses. In the case where we are only concerned about non-tight realisability we assume the controller does not have any
accepting states. 

\subsection{Monitor-Triggered Synthesis}
We are now ready to handle synthesis for monitor-triggered LTL.

\begin{definition}
	A monitor-triggered LTL formula $\pi$ over set of events $\prop_{in}$
	and $\prop_{out}$ is \emph{realisable} if there exists a
	Mealy machine $C$ over input events $2^{\prop_{in}}$ and output events $2^{\prop_{out}}$
	such that for all words $w$ produced by $C$ we
	have $w\vdash \pi$. We say that $C$ \emph{realises} $\pi$. 
\end{definition}

In the case of simple triggers, we combine the monitor with a Mealy
machine realising $t(\pi)$.
In the case of repeating triggers, we combine the monitor with a
Mealy machine tightly realising $t(\pi)$. In what follows we define the behaviour of the combination of a monitor and a Mealy machine.

Consider a specification $\pi=\gamma \rightarrow \pi'$, where $\pi'$ is
either $M{:}\phi$ or $(M;\varphi)^*$.

\begin{theoremrep}\label{thm:mttlrealis}
	Let $C_{t(\pi)}$ be a Mealy machine realising $t(\pi)$ when $\pi'$ is a simple-trigger LTL, and tightly realising $t(\pi)$ when $\pi'$ is a repeating-trigger LTL.
	Then there is a Mealy machine $M
	\blacktriangleright C_{t(\pi)}$ that realises $\pi$.
\end{theoremrep}
\begin{proofsketch}
	$M
	\blacktriangleright C_{t(\pi)}$ can be constructed over states that correspond to a tuple of $M$ states, valuations, and $C_{t(\pi)}$ states. \Fdate transitions can be unfolded into Mealy machine transitions with no outputs, according to their semantics. Transitions to a flagging state can be composed with transitions from the initial state of $C_{t(\pi)}$. For the repeating case, transitions to final states of $C_{t(\pi)}$ are made instead to point back to the initial configuration (initial state and valuation of $M$, and initial state of $C_{t(\pi)}$). Execution happens only in one machine at a time, except for the shared transition in the repeating case. We show by induction the correctness of this construction.\footnote{Full proof can be found in Appendix~C.}
	
\end{proofsketch}
\begin{appendixproof}
	We give a construction for such a controller, and prove it realises $\pi$.
	
	The \fdate-activated controller $M
	\blacktriangleright C$ is defined as the Mealy machine with the
	following components $\langle S, s_0, \prop_{in},
	\prop_{out},\rightarrow, \emptyset\rangle$, where $S=(Q_M \times \Theta_M) \cup
	Q_C$, $s_0 = (q^0_M,\theta_0)$, $\prop_{in}$ and $\prop_{out}$ are
	shared among all three. The transition relation $\rightarrow$ is the minimal relation respecting: 
	\begin{enumerate}
		
		\item If the \fdate is in a state that has an outgoing transition to a non-final state then take that transition. Note that this also applies to sink states.
		\begin{mathpar}
			\inferrule
			{q'_M \not\in F \qquad (q_M,\theta) \xrightarrow{I}_M (q'_M,\theta')}
			{(q_M,\theta) \xrightarrow[]{I / \emptyset} (q'_M,\theta')}
		\end{mathpar}
		
		\item If the \fdate is in a state that has an outgoing transition to a final state then take that transition synchronously with a corresponding transition from the controller initial state.
		\begin{mathpar}
			\inferrule
			{q'_M \in F_M 
				\qquad
				(q_M,\theta) \xrightarrow{I}_M (q'_M,\theta') \qquad q^0_C \xrightarrow{I/O}_C q_C}
			{(q_M,\theta) \xrightarrow[]{I / O} q_C}
		\end{mathpar}

		\item
		If the controller $C$ is in a non-accepting state take transitions
		from that state. 
		\begin{mathpar}
			\inferrule
			{q_C\notin F_C \qquad q_C\xrightarrow{I/O}_C q'_C}
			{q_C \xrightarrow[]{I / O} q'_C}
		\end{mathpar}
		
		\item
		If the controller $C$ has a transition to an accepting state then transition to the initial state of the \fdate. Note that this will not be activated in the case of simple triggers, for which we will be using controllers without accepting states.
		\begin{mathpar}
			\inferrule
			{q'_C\in F_C \qquad q_C \xrightarrow{I/O} q'_C}
			{q_C \xrightarrow[]{I / O} (q^0_M,\theta^0)}
		\end{mathpar}
		
	\end{enumerate}

To prove this construction realises $\pi$, we consider two cases, when $\pi'$ is a simple-trigger formula $D{:}\phi$ and when it is a repeating-trigger formula $(D{;}\varphi)^*$. Throughout we assume we are considering traces that satisfy the assumption, otherwise the specification is trivially satisfied.

In the case of a simple trigger, it is easy to see that there are two cases: either the monitor never flags, or it flags. If the monitor never flags (i.e. only rule 1 is ever used), then $\pi$ is trivially satisfied.  If the monitor flags, then a suffix of the trace is produced by $C_{t(\pi)}$ (note rules 2 and 3). That is, there is a $j$ such that $\sigma_{0,j} \Vdash D$ and $\sigma_{j, \infty}$ is produced by $C_{t(\pi)}$. Note that rule 4 in the construction is never activated, since $C_{t(\pi)}$ will not have accepting states. However every trace produced by $C_{t(\pi)}$ also satisfies $t(\pi)$, since the former realises the latter 
. The result then easily follows.

In the case of a repeating trigger, there are two cases: either the monitor flags finitely often, or the monitor flags infinitely often. 

\noindent
Suppose the monitor flags finitely often. Consider by induction that the monitor flags zero times, then it is easy to see that the trace satisfies $(D{;}\varphi)^*$, since $D$ never flags. The inductive step, when the monitor flags $n+1$ times, can be easily reduced to $n$th case by pruning from the trace the prefix that satisfies $D{;}\varphi$. This prefix must exist since the monitor must flag at least once and upon entering  $C_{t(\pi)}$, which tightly realises the co-safety formula $t(\pi)$ and thus will accept.

\noindent
Suppose the monitor flags infinitely often, and for contradiction suppose the formula is not realised. There must then be some trace whose prefix does not satisfy $D{;}\varphi$. This prefix in turn must have a prefix that satisfies $D$, otherwise the whole specification $D{;}\varphi$ is satisfied. However, since $D$ must flag on this violating trace then, since $C_{t(\pi)}$ tightly realises the co-safety formula $t(\pi)$, we are guaranteed the prefix will eventually be extended to accept and satisfy $\varphi$. Thus $D{;}\varphi$ must be satisfied by the prefix, which creates a contradiction. \qed
	
\end{appendixproof}

The opposite of Theorem~\ref{thm:mttlrealis} is, however, not
true.
If $\pi$ is realisable then it does not necessarily 
mean that $t(\pi)$ is also realisable.
Consider  
a specification with a \fdate that never
flags, and which thus any Mealy machine realises.
Another example is with a \fdate that
only flags upon seeing the event set $\{a\}$, and an LTL formula of
the form $(b \implies \fff) \wedge (a \implies c)$ (where $a$ and $b$
are input events, and $c$ an output event).
The LTL formula is clearly unrealisable given the first conjunct,
however the combination of the \fdate with a controller for LTL's
second conjunct would realise the corresponding specification.
Thus the construction in Thereom~\ref{thm:mttlrealis} is only sound
but not complete, i.e., we have a procedure to produce controllers for
our language only when the underlying LTL formula (modulo the
assumption) is realisable, or when the monitor cannot flag.

\begin{corollary}
  If $M$ cannot flag, or $t(\pi)$ is realisable, then $\pi$ is realisable. 
\end{corollary}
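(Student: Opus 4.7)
The plan is to reduce the corollary to Theorem~\ref{thm:mttlrealis} in the first disjunct and to a direct vacuity argument in the second, with no new machinery required.

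First I would handle the case that $t(\pi)$ is realisable. If $\pi'$ has the form $D{:}\phi$, then $t(\pi)=\gamma\rightarrow\phi$ and I simply take $C_{t(\pi)}$ to be any Mealy machine realising $t(\pi)$ (obtained from the classical LTL synthesis result). If $\pi'$ has the form $(D;\varphi)^*$, then $t(\pi)=\gamma\rightarrow\varphi$ with $\varphi$ co-safety, and Theorem~\ref{thm:tightreal} converts realisability into tight realisability, giving a Mealy machine $C_{t(\pi)}$ that tightly realises $t(\pi)$ at the same complexity. In either case, Theorem~\ref{thm:mttlrealis} then produces $M\blacktriangleright C_{t(\pi)}$, which by that theorem realises $\pi$.

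Second I would handle the case where $M$ cannot flag, meaning no finite trace $\sigma_{i,j}$ satisfies $\sigma_{i,j}\Vdash M$. Here I unfold Definition~\ref{def:mttlsem}. If $\pi'=M{:}\phi$, then for every $\sigma$ and every $i$ the antecedent of the implication in $\sigma_{i,\infty}\vdash M{:}\phi$ is false (take any witness $j$, say $j=i$), so $\sigma_{i,\infty}\vdash M{:}\phi$ holds vacuously. Similarly, if $\pi'=(M;\varphi)^*$, then the outer universally quantified implication $\sigma_{0,i}\Vdash M \implies \ldots$ has a false antecedent at every $i$, so $\sigma\vdash(M;\varphi)^*$ holds vacuously. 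Combined with the outer assumption implication in $\pi=\gamma\rightarrow\pi'$, every trace satisfies $\pi$; hence any Mealy machine over $\prop_{in}$ and $\prop_{out}$ realises $\pi$, and such machines trivially exist.

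I expect no real obstacle: the content is entirely bookkeeping on top of Theorems~\ref{thm:tightreal} and~\ref{thm:mttlrealis}. The only point requiring mild care is checking that the vacuity argument uses the semantics of Definition~\ref{def:mttlsem} correctly, in particular that the existential witness $j=i$ is permissible in the simple-trigger case and that the $\forall i$ in the repeating case is discharged uniformly when $M$ never flags.
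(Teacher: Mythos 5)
Your proposal is correct and matches the paper's intended argument: the paper states this corollary without a separate proof precisely because it follows from Theorem~\ref{thm:mttlrealis} (via Theorem~\ref{thm:tightreal} in the repeating case) when $t(\pi)$ is realisable, and from vacuous satisfaction of Definition~\ref{def:mttlsem} when $M$ never flags. Both of your case analyses, including the witness $j=i$ for the existential in the simple-trigger semantics and the uniformly false antecedent in the repeating case, are sound.
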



We recall that we have restricted the assumptions to a combination of
invariants, transition invariants, and recurrence properties.
Such assumptions are ``state-less''.
That is, identifying whether a word satisfies an assumption does not
require to follow the state of the assumption.
Thus, in our synthesis procedure it is enough for the controller to
check whether the assumption holds without worrying about what
happened during the run of the monitor that triggered it.
In particular, if (safety) assumptions are violated only during the
run of monitors, our Mealy machine will still enforce satisfaction of
the implied formula.
In order to treat more general assumptions, we would have to either
analyse the structure of the monitor in order to identify in which
``assumption states'' the controller could be started or give a
precondition for synthesis by requiring that the controller could
start from an arbitrary ``assumption state'', which we leave for future work. Similarly, understanding the conditions the monitor enforces and using them as assumptions would allow us to get closer to completeness of Theorem~\ref{thm:mttlrealis}. One coarse abstraction is simply the disjunction of the monitor's flagging transitions' guards as initial assumptions for the LTL formula.

\section{Tool Support} 

We created a proof-of-concept automated tool\footnote{\url{https://github.com/dSynMa/syMTri}} to support the theory presented in this paper. Implemented in Python, this tool currently accepts as input a \fdate written in a syntax inspired by that of LARVA \cite{CPS09ltr,Shaun_Azzopardi48593129}, and an LTL specification, while it outputs a symbolic representation of the Mealy machine constructed in the proof of Theorem~\ref{thm:mttlrealis}, in the form of a \fdate with outputs.

The proof of Theorem~\ref{thm:tightreal} is constructive and provides an optimal algorithm to synthesise tight controllers using standard automata techniques. For this tool we have instead opted to re-use an existing synthesis tool, Strix \cite{strix,strix1} due to its efficiency. To force Strix to synthesise a tight controller (for repeating triggers), the tool performs a transformation to the co-safety guarantees to output a new event that is only output once a tight witness is detected. This transformation works well on our case studies, but is exponential in the worst-case. This is due to the need for disambiguating disjunctions. For example, given $\psi_1 \vee \psi_2$ we cannot in general easily be sure which disjunct the controller will decide to enforce; instead we disambiguate it to $(\psi_1 \wedge \neg \psi_2) \vee (\neg \psi_1 \wedge \psi_2) \vee (\psi_1 \wedge \psi_2)$ (cf. \cite{DBLP:conf/tacas/BenediktLW13}).

\section{Case Studies}


We have applied our \fdate approach mainly in the setting of conditions on the sequence of environment events, for which synthesis techniques can be particularly inefficient. We will consider a case study involving such conditions, where several events need to be observed before a robot can start cleaning a room. Furthermore we consider a problem from SYNTCOMP 2020 on which all tools timed out due to exponential blowup as the parameter values increase, relating to observing two event buses. We show how our approach using {\fdate}s avoids the pitfalls of existing approaches with regards to these kinds of specifications. 

\subsection{Event Counting}

\begin{figure}[t]\centering
    \begin{tikzpicture}[->,>=stealth',shorten >=1pt,auto,node distance=2.8cm,semithick,cross/.style={path picture={\draw[black](path picture bounding box.south east) -- (path picture bounding box.north west) (path picture bounding box.south west) -- (path picture bounding box.north east);}}]
        \tikzstyle{every state}=[text=black]
        
        \node[initial,state] (A)                    {$q_0$};
        \node[state]         (B) [right =4.2cm of A] {$q_1$};
        \node[state, accepting]         (C) [right = 4cm of B] {$q_2$};

        \path (A) edge [loop above]              node {$\begin{array}{l}
                \textit{inUse} \in E \wedge \textit{inUse} < n\\ \mapsto \textit{inUseFor} \plusplus
            \end{array}$} (A);
        \path (A) edge []              node {$\begin{array}{l}
                \textit{inUse} \in E \wedge \textit{inUseFor} \geq n
            \end{array}$} (B);
        
        \path (B) edge [in=40,out=70,loop, above right]              node {$\begin{array}{l}
                \textit{inUse} \not\in E \wedge unused < m\\ \mapsto \textit{unused} \plusplus
            \end{array}$} (B);
        
        \path (B) edge [out=140,in=110,loop, above left]              node {$\begin{array}{l}
                \textit{inUse} \in E\\ \mapsto \textit{unused} = 0
            \end{array}$} (B);
        
        \path (B) edge []              node {$\begin{array}{l}
                \textit{inUse} \not\in E \wedge \textit{unused} \geq m
            \end{array}$} (C);
    \end{tikzpicture}
    \caption{Flagging monitor that checks that the room has been in use for $n$ time steps, after which when there is a period of $m$ time steps where the room is empty it flags.}
    \label{fig:monitorroomuse}
\end{figure}
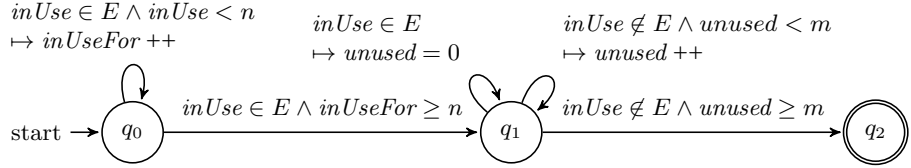

\begin{figure}[htbp!]
	\includegraphics[width=\textwidth]{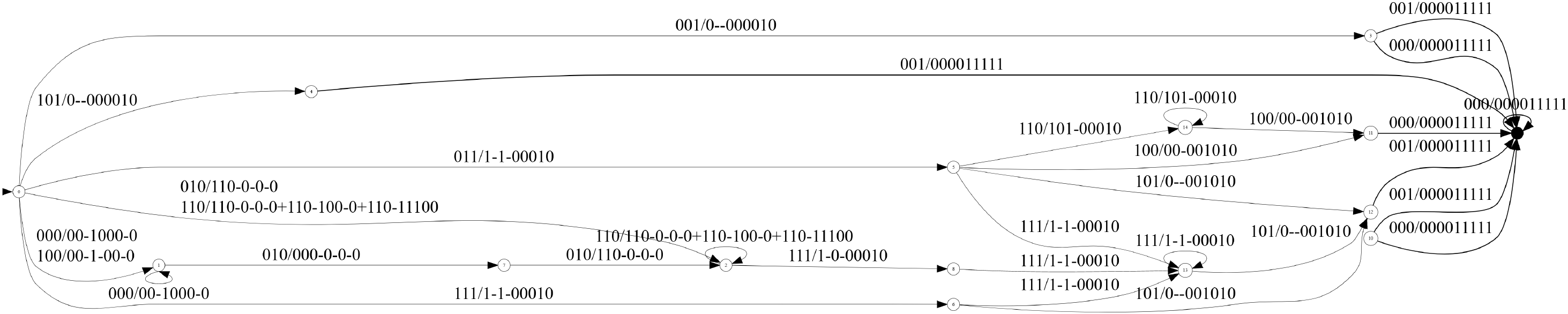}
	\caption{Tight controller for cleaning robot, with rightmost state as accepting state. ($1$ ($0$) in position $i$ means event $i$ (not) occurs, and ${\texttt{-}}$ when we do not care).}
	\label{fig:finalcontroller}
\end{figure}

Consider a break room that is used by people intermittently during the day, and that needs to be cleaned periodically by a cleaning robot. We do not want to activate the robot every time the room is unclean to not disturb people on their break. Instead our procedure involves checking that the room is in use for a certain amount $n$ of time steps. We also do not want the cleaning robot to be too eager or to activate immediately upon an empty room. Thus we further want to constrain the robot's activation on the room being empty for a number of $m$ time steps and reset the counting whenever the room is not empty. We can represent these conditions using the \fdate in Figure~\ref{fig:monitorroomuse}. 

Given a set of assumptions on the environment (e.g. cleaning an unclean locked room eventually results in a clean room), we wish the controller to satisfy that eventually the room will be
clean, after which the robot leaves the room and opens the door to the
public: $F (\textit{isClean}\ \&\ (X F\ !\textit{inRoom})\ \&\ (X F\ !\textit{doorLocked}))$.\footnote{The full specification is available with our tool.} Our tool synthesises Figure~\ref{fig:finalcontroller} as a tight controller for this.



Representing the first monitor condition in LTL is not difficult
($\neg p W (p \wedge X (\neg p W p \wedge ... ))$), where
proposition $p$ corresponds to $\textit{inRoom}$ and $W$ is the weak until operator. 
The second condition is different, given the possible resetting of the
count, but still easily 
representable in LTL
($(\bigvee_{i=0}^{m-1} X^i p) W (\bigwedge_{i=0}^{m-1} X^i\neg p)$).
Setting $n, m = 2$, and $\varphi$ to be what we require out
of the cleaning robot in one step, then a step of our specification (\textit{without repetition}) in LTL is:
$$ \psi = \neg p W (p \wedge X (\neg p W (p \wedge X ((p \vee X p) W (\neg p \wedge
X (\neg p \wedge \varphi)))))).$$

In fact Strix confirms this to be realisable, and
produces an appropriate Mealy Machine with eighty
transitions, the size of which increases with each increase in any of the parameters.

However, using our approach all we require is
Figure~\ref{fig:monitorroomuse} and
Figure~\ref{fig:finalcontroller}. By representing the counting part of
the specification using a \fdate we can create a specification
much more succinct than the LTL one, while its representation
is of the same size for each value of the parameter. Moreover in LTL it is not clear how to reproduce our repeating triggers. 

The difference is that the traditional approaches explicitly enumerate
every possible behaviour and state of the controller at runtime, which
can get very large. In our approach we are instead doing this
symbolically, and allowing the particular behaviour of the environment
at runtime to drive our symbolic monitor. The extra cost associated
with this is the semantics of guard evaluation and maintaining
variable states. For this example, the cost of the variable states
(only two variables) is much smaller than the cost of the Strix
generated machine, while guards simply
check for membership and use basic arithmetic operations.

\subsection{Sequences of Events}

We consider a benchmark from SYNTCOMP 2020 \cite{syntcomp}\footnote{The considered benchmark corresponds to files of the form \texttt{ltl2dba\_beta\_$<$n$>$.tlsf}.}, that generates formulas of the form, e.g. for $n = 2$, $F(p_0 \wedge F(p_1)) \wedge F(q_0 \wedge F(q_1)) \iff G F acc$. Strix \cite{strix}, the best-performing tool in the LTL tracks of the competition, was successful when the bus size was small, however timed out for $n = m = 12$ (and above). The issue here is that the generated strategy must take into account every possible interleaving of the two sequences, which quickly causes a state space explosion.


With our approach we can represent the left-hand side in constant-size for any $n$ and $m$ as illustrated in Figure~\ref{fig:inseq}, where $\textit{maxInSeqX}$ is a function that when $xCount = i$ returns $i$ if $E$ does not contain $x_{i+1}$, and otherwise returns the maximal $j$ such $x_{i+1}, x_{i+2}, ... ,$ and $x_j$ are all in E. The benefits apply however complex the right-hand side.


\begin{figure}[t]
    \begin{tikzpicture}[->,>=stealth',shorten >=1pt,auto,node distance=2.8cm,semithick,cross/.style={path picture={\draw[black](path picture bounding box.south east) -- (path picture bounding box.north west) (path picture bounding box.south west) -- (path picture bounding box.north east);}}]
        \tikzstyle{every state}=[text=black]
        
        \node[initial,state] (A)                    {$q_0$};
        \node[state, accepting]         (acc) [right = 6.5cm of A] {$q_F$};
        
        \path (A) edge [loop above]              node {$
            \begin{array}{ll}
                & \textit{maxInSeqP}(E) \neq n\ \wedge   \textit{maxInSeqQ}(E) \neq n\\  
                & \mapsto \textit{pCount} := \textit{maxInSeqP}(E); \textit{qCount} :=\textit{maxInSeqQ}(E)
            \end{array}$} (A)
        edge []             node {$\textit{maxInSeqP}(E) = n \wedge \textit{maxInSeqQ}(E) = m$} (acc);
    \end{tikzpicture}
    \caption{Event ordering in two buses.}
    \label{fig:inseq}
\end{figure}
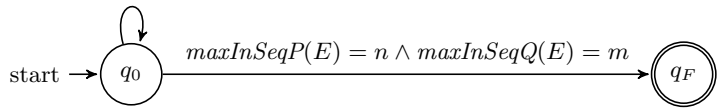

To replicate the whole LTL formula we can use $M{:}GF acc$, where $M$ is the \fdate in Figure~\ref{fig:inseq}. This is somewhat different from the original specification, where a necessary and sufficient relation was specified. One would be tempted to specify this as (for $n = 1$) $(M;(F acc))^*$, however the monitor is not active while the controller is activated, thus $p_1$ and $q_1$ may occur in tandem with $acc$ but be missed by the monitor. Although this is not of consequence towards the satisfaction of the formula here ($p_1$ should occur infinitely often), this is not generally the case. On the other hand $M{:}G F acc$ captures that upon the first activation of $M$ there is no need to monitor the environment's behaviour anymore, and thus is equivalent to the original specification for control.

\section{Discussion}

The case studies we considered in the previous section focused on counting and waiting for sequence of events. We expect other useful applications of \fdates as triggers, given they can be used to specify more sophisticated quantitative properties out of reach for LTL, e.g. see Figure~\ref{fig:averagingdate} \cite{CPS09ltr}.

We have highlighted how our approach extends the scope of use of reactive synthesis. It is clear that we can gain in scalability and expressiveness, but there is a price to pay: the ``trigger'' part. In general, to avoid lack of guarantees one can avoid working directly with automata, and instead use regular expressions or co-safety LTL formulas (under our notion of tight satisfaction) as triggers. Standard inexpensive monitor synthesis \cite{rv-for-tltl} could then be used to generate a \fdate. In the case of more expressive manually-written monitors, which is standard for runtime verification (e.g. \cite{10.1007/978-3-642-03240-0_13}), in practice one can easily apply model checking to the monitor to ensure it satisfies specific properties (e.g. no infinite loops).

There are certain benefits to using a symbolic representation, including succinct representation, and easy parametrisation. The Mealy machine construction we give in the proof of Theorem~\ref{thm:mttlrealis} is in fact not carried out by our tool, but instead it produces a symbolic monitor with outputs that essentially performs the construction on-the-fly. The cost of unfolding is then only paid for the trace at runtime, rather than for all possible traces. Moreover, a symbolic representation allows for specification of parametrised specifications, when parametrisation can be pushed to the monitor side. This can be done by adding any required parameter to the variables of the \fdate, and instantiating its value in the initial variable valuation of the \fdate appropriately. Note that our results are agnostic of the initial valuation, and thus hold regardless of the parameter values.


We have not yet discussed conjunction of trigger formulas, e.g. $(M_1;\psi_1)^* \wedge (M_2;\psi_2)^*$. Conjunction is easy when the output events of $\psi_1$ and $\psi_2$ respectively talk about are independent from each other. Our controller construction can be used independently for each. Similarly, when the properties are safety properties there is no difficulty. However, when, e.g.,  $\psi_1$ is a liveness property with at least one output event correlated with an output event of $\psi_2$, then conjunction is more difficult, due to possible interaction between the two possibly concurrent controllers. We are investigating a solution for this issue of concurrency of controllers by identifying appropriate assumptions about the monitor.

%

Theorem~\ref{more expressive} compares our expressive power to that of
LTL.
We also mention that we do not restrict the variables used by
monitors.
Thus, even when comparing with languages that include
regular expressions or automata
\cite{forspec,10.5555/2540128.2540252,10.1007/978-3-642-17511-4_18} our
language would be more expressive. 
If we were to restrict monitors to be finite state, then, as these
languages can express all $\omega$-regular languages, it is clear that
they would be able to express our specifications. 
We note, however, that the repeating trigger operator is not directly
expressible in these languages.
Thus, the translation involves a conversion of our specification to an
automaton and embedding this automaton in ``their'' specification.
The conversion of our specification to an automaton includes both the
enumeration of the states of the monitor and the exponential
translation of LTL to (tight) automata.

\section{Conclusions}

We have explored synthesis for specifications that combine modelling and declarative aspects, in the form of symbolic monitors triggers for LTL formulas. We have shown how this extends the scope of synthesis by allowing parts of a specification that are hard for synthesis to be instead handled in the monitor part. The synthesis algorithm we give synthesises the LTL part without requiring the need to reason about the monitor.  Moreover, we have implemented this approach and applied it to several case studies involving counting and monitoring multiple sequences of events that can be impossible or hard for LTL synthesis. We showed how by exploiting the symbolic nature of the monitors we can create fixed-size parameterised controllers for some parameterised specifications.  \\


\noindent
\textbf{Future Work}
Our work opens the door to a number of interesting research avenues, both by using richer monitor triggers and by exploring different interactions between triggers and controllers. We discuss below just a few such possibilities. In all the cases below the challenges lie not only in providing a new language to capture the extension but rather in the theoretical framework with a proof that the integration is sound.

A first intuitive extension is to add real-time to the monitors, to express properties like ``\textit{compute the average use of a certain resource every week and activate the controller to act differently depending on whether the average is bigger (or smaller) than a certain amount}''. While extending the monitor with real-time is quite straightforward (our monitors are restricted versions of DATEs \cite{10.1007/978-3-642-03240-0_13} which already contain timers and stopwatches), the challenge will be to combine it with the controller in a suitable manner. Having real-time monitors running in parallel with controllers would enable for instance the possibility to add timeouts to activities performed by the controllers.

Currently we have a strict alternation between the execution of the monitor and the controller: we would like to explore under which conditions the two can instead run in parallel. This would allow the controller to react to the monitor only when certain complex condition hold while the controller is active doing other things (e.g., the monitor might send an interruption request to the controller when a certain sequence of events happens within a certain amount of time, while the controller is busy ensuring a fairness property). 

We could also have many triggers that run in parallel activating different controllers, or even some meta-monitor that acts as an orchestrator to enable and disable controllers depending on certain conditions. This might require to extend/modify the semantics since the interaction might be done asynchronously.

We would like to address the limitation of controller synthesis concerning what to do when the assumptions are not satisfied. It is well-known that in order to be able to automatically synthesise a controller very often one must have strong assumptions, and nothing is said in case the assumptions are not satisfied. We would like to explore the use of \fdates to monitor the violation of assumptions and interact with the controller in order to coordinate how to handle those situations (we can for instance envisage a procedure that automatically extends the controller with transitions that takes the controller to a recovery state if the assumptions are violated).



 \bibliographystyle{splncs04}
 \bibliography{references}
 
\newpage

\appendix

\end{document}